\newcommand{\bft}{{\bf t}}
\newcommand{\z}{{\bf z}}
\newtheorem{theorem}{Theorem}[section]
\newtheorem{proposition}[theorem]{Proposition}
\newtheorem{corollary}[theorem]{Corollary}
\begin{document}

\title{Accelerated Polynomial Evaluation and Differentiation \\
at Power Series in Multiple Double Precision\thanks{Supported by 
the National Science Foundation under grant DMS 1854513.}}

\author{Jan Verschelde\thanks{University of Illinois at Chicago,
Department of Mathematics, Statistics, and Computer Science,
851 S. Morgan St. (m/c 249), Chicago, IL 60607-7045
Email: {\tt janv@uic.edu}, URL: {\tt http://www.math.uic.edu/$\sim$jan}.}}

\date{12 March 2021}


\maketitle

\begin{abstract}
The problem is to evaluate a polynomial in several variables and
its gradient at a power series truncated to some finite degree
with multiple double precision arithmetic.
To compensate for the cost overhead of multiple double precision
and power series arithmetic, data parallel algorithms 
for general purpose graphics processing units are presented.
The reverse mode of algorithmic differentiation is organized into 
a massively parallel computation of many convolutions and additions
of truncated power series.
Experimental results demonstrate that teraflop performance is obtained
in deca double precision with power series truncated at degree~152.
The algorithms scale well for increasing precision and increasing degrees. 

\medskip

\noindent {\bf Keywords.} 
acceleration, convolution, CUDA, differentiation, evaluation, GPU,
parallel, polynomial, precision.
\end{abstract}

\section{Introduction}

Solving systems of many polynomial equations in several variables
is needed in various fields of science and engineering.
Numerical continuation methods apply path trackers~\cite{Mor87}.
A path tracker computes approximations of the solution paths defined
by a family of polynomial systems.
The paths start at known solutions of easier systems
and end at the solutions of the given system.
The evaluation and differentiation of polynomials
often dominates the computational cost.

The main motivation for this paper is to accelerate a new robust
path tracker~\cite{TVV20a}, added recently to PHCpack~\cite{Ver99},
which requires power series expansions
of the solution series of polynomial systems.
As shown in~\cite{TVV20b}, double precision may no longer suffice
to obtain accurate results, for larger systems, and for longer power
series, truncated at higher degrees.
The double precision can be extended with double doubles, 
triple doubles, quad doubles, etc.,
applying multiple double arithmetic~\cite{MBDJJLMRT18}.
The goal is to compensate for the computational cost overhead 
caused by power series and multiple double precision by the
application of data parallel algorithms on general purposed
graphics processing units (GPUs).

The CUDA programming model (see~\cite{KH13} for an introduction)
is applied.  The software was developed on five different 
NVIDIA graphics cards: the C2050, K20C, P100, V100 (on Linux);
and the GeForce~2080 (on Windows).

\noindent {\bf Prior work.}  Data parallel algorithms 
for polynomial evaluation and differentiation~\cite{GW08}
were first presented by G.~Yoffe and the author in~\cite{VY12},
using double double and quad double arithmetic of~\cite{HLB01} on the host
and of~\cite{LHL10} on the device.
Adding accelerated linear algebra~\cite{VY13} led to an accelerated
Newton's method~\cite{VY14}, and to accelerated path 
trackers~\cite{VY15a,VY15b}.

\noindent {\bf Related work.}  The authors of~\cite{GM20a,GM20b}
describe a GPU implementation of a path tracker,
with an application to kinematic synthesis.
Instead of an adaptive step size control, paths for which the desired 
accuracy is not achieved are recomputed with a smaller step size.
The source code for the computations of~\cite{GM20a,GM20b} can be
found in the appendix of~\cite{Gla20}.

Accelerating the multiplication of polynomials is reported
in~\cite{Eme09} and~\cite{HM12,HLMMPX14,MP10}.
Those algorithms are applied with exact, modular arithmetic,
to polynomials with huge degrees.
A GPU-accelerated application of adjoint algorithmic differentiation
is implemented in~\cite{GHRKN16} for the gradient computation
of cost functions given by computer programs.

In~\cite{IK20}, several software packages for high precision arithmetic
on GPUs are considered.  For the problem of matrix-vector multiplication,
the double double arithmetic of CAMPARY~\cite{JMPT16} performs best.
In quad double precision, the performance of the implementation with CAMPARY
comes close to the multiple precision proposed by~\cite{IK20}.

\noindent {\bf Contributions.}
This paper extends the ideas of~\cite{VY12} to power series
and to more levels of multiple double precision,
with the code generated by the CAMPARY software~\cite{JMPT16}.
In addition to double, double double, and quad double precision,
the algorithms in this paper run also in triple, penta, octo, and
deca double precision, extending double precision respectively
three, five, eight, and ten times.

In~\cite{KH13}, convolutions and scans are explained as parallel patterns.
This paper presents novel data staging algorithms.
In one addition of two power series consecutive threads in the block
add consecutive coefficients of the series.
In one multiplication of two power series, the number of steps equals
the degree~$d$ at which the series are truncated.
The computations in the data parallel algorithms are defined by two
sequences of jobs.  The first sequence computes all multiplications,
for all monomials, while the second sequence encodes the additions 
of all evaluated monomials.  
The theoretical speedup of the novel parallel algorithms is a multiple
of~$d$.

In deca double precision, on power series truncated at degree 152,
teraflop performance is reached on the P100 and the V100.
Experimental results show the scalability for increasing degrees
and increasing precisions.

\section{Convolutions}

Consider the product $z$ of two power series $x$ and $y$,
both truncated to the same degree~$d$,
so the input consists of two sequences of $d+1$ coefficients.
The output are the $d+1$ coefficients of the product~$z$.
In a data parallel algorithm with $d+1$ threads,
thread~$k$ will compute the $k$-th coefficient $z_k$ of~$z$,
defined by the formula 
\begin{equation} \label{eqconvform}
   z_k = \sum_{i=0}^k x_i y_{k-i}, \quad k = 0, 1, \ldots, d,
\end{equation}
where $x_i$ is the $i$-th coefficient of~$x$
and $y_{k-i}$ is the $(k-i)$-th coefficient of~$y$.
In the direct application of the formula in~(\ref{eqconvform}),
every thread performs a different number of computations,
which results in thread divergence.

The remedy for this thread divergence is to insert zero numbers
before the second vector when threads load the numbers into
the shared memory of the block.  In the statements below,
$X$, $Y$, and $Z$ represent the shared memory locations,
respectively for the coefficients of $x$, $y$, and $z$.
The $Y$ has space for at least $2 d + 2$ numbers.
In the data parallel algorithm with zero insertion,
thread~$k$ executes the following statements, expressed in pseudo code:

\begin{tabbing}
\hspace{1cm} \= 1. $X_k := x_k$ \\
             \> 2. $Y_k := 0$ \\
             \> 3. $Y_{d+k} := y_k$ \\
             \> 4. $Z_k := X_0 Y_{d+k}$ \\
             \> 5. for\= ~$i$ from 1 to $d$ do 
 $Z_k := Z_k + X_i Y_{d+k-i}$ \\
             \> 6. $z_k := Z_k$
\end{tabbing}
Thread $k$ executes the same statements on different data.
In the statements above, the $k=0, 1, \ldots, d$
of formula~(\ref{eqconvform}) is implicit.
As the data parallel version eliminates the outside loop on~$k$,
it is expected to run about $d$ times faster
than the sequential application of formula~(\ref{eqconvform}).

The zero insertion justifies the auxiliary vector~$Y$,
but why are $X$ and~$Z$ needed?  The coefficients $x$, $y$, and~$z$
reside in the global memory of the device.
Access to global memory is slower than access to shared memory.
Observe that all threads need access to $x_0$.
Retrieving $x_0$ once from global memory and then $d+1$ times
from shared memory is expected to be faster than $d+1$ times 
retrieving $x_0$ from global memory.
The same argument applies to the assignments to $Z_k$.
Every thread assigns once to $Z_k$ and then updates $Z_k$
as many as $d$ times.  Only at the very end of the algorithm
is the value of $Z_k$ in shared memory assigned to the value
of $z_k$ in global memory.
Another benefit of using vectors in shared memory occurs
when one needs to update the same series $x$ or $y$ with the product.
Replacing $z$ by $x$ or $y$ in~(\ref{eqconvform})
requires an auxiliary vector.

All coefficients are stored in consecutive memory locations,
promoting efficient memory access.
For arrays of doubles, threads with consecutive indices
access the corresponding consecutive memory locations.
For complex numbers and multiple double numbers,
the same efficient memory access is obtained by storing
real and imaginary parts of complex numbers in separate arrays
and by storing all parts of multiple double numbers 
in separate arrays.

The other basic operation is the addition of two power series.
In the data parallel version, one block of threads adds two power series.
If the block has exactly has exactly as many threads as the number
of coefficients of the power series, then thread $k$ adds the
$k$-th coefficient of the two series.

The next two sections elaborate the scheduling of convolution jobs.

\section{Monomial Evaluation and Differentiation}

Consider a monomial $a ~\! x_1 x_2 \cdots x_n$,
in the $n$ variables $x_1$, $x_2$, $\ldots$, $x_n$,
and $a$ is a nonzero power series, truncated to degree~$d$.
We want to evaluate and differentiate this monomial
at a sequence of $n$ power series~$\z = (z_1, z_2, \ldots, z_n)$.
All series in $\z$ are also truncated to degree~$d$.

For monomials with positive powers, e.g.: $x_1^3 x_2^5$,
observe that the value of $x_1^2 x_2^4$ is not only 
a factor of the monomial value, but is also a factor 
in all values of the derivatives.
Therefore, we write $x_1^3 x_2^5$ as $a~\! x_1 x_2$,
where $a = x_1^2 x_2^4$.  This common factor is then
evaluated with a table of powers of the variables.

The statements below assume that $n$ is larger than 2.
Each $\star$ represents a convolution of two power series.
The $n$ forward products are stored in the $n$-dimensional array~$f$.
The $(n-2)$-dimensional array~$b$ collects the backward products.
Other partial derivatives can then be found 
in the $(n-2)$-dimensional array $c$ of cross products.
\begin{tabbing}
\hspace{1cm} \= 
    1. $f_1 := a \star z_1$ \\
 \> 2. for $j$ from 2 to $n$ do $f_j := f_{j-1} \star z_i$ \\
 \> 3. $b_1 := z_n \star z_{n-1}$ \\
 \> 4. for $j$ from 2 to $n-2$ do $b_j := b_{j-1} \star z_{n-j}$ \\
 \> 5. $b_{n-2} := b_{n-2} \star a$ \\
 \> 6. for $j$ from 1 to $n-3$ do $c_j := f_j \star b_{n-3-j}$ \\
 \> 7. $c_{n-2} := f_{n-2} \star z_n$
\end{tabbing}
The amount of $\star$ operations equals the total amount of
auxiliary storage, plus one (as $b_{n-2}$ gets assigned twice): $3n - 3$.
In the example below for $n=5$, if one expands
the left and right operands of $\star$ into their values,
one can verify that $b_3$, $c_1$, $c_2$, $c_3$, and $f_4$
contain the values of all five partial derivatives.
The organization in three columns shows the parallelism.
Statements on the same line in~(\ref{eqstatements})
can be executed in parallel.

\begin{equation} \label{eqstatements}
  \begin{array}{lll}
   f_1 := a \star z_1 
 & b_1 := z_5 \star z_4 \\
   f_2 := f_1 \star z_2 
 & b_2 := b_1 \star z_3 \\ 
   f_3 := f_2 \star z_3 
 & b_3 := b_2 \star z_2 
 & c_1 := f_1 \star b_2 \\ 
   f_4 := f_3 \star z_4 
 & b_3 := b_3 \star a 
 & c_2 := f_2 \star b_1 \\ 
   f_5 := f_4 \star z_5 & 
 & c_3 :=  f_3 \star z_5 
  \end{array}
\end{equation}

For $n = 5$, three blocks of threads may evaluate and differentiate
one monomial in five steps.
For $n > 5$, observe that $c_1$ needs $z_n z_{n-2} \cdots z_3$,
or the value of $b_{n-3}$.
While this observation seems to limit the amount of parallelism,
the cross products need not be computed one after the other.

\begin{proposition} \label{propmon}
The $j$-th cross product $c_j$ can be computed after $\max(j, n-3-j)$ steps.
\end{proposition}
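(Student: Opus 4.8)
The plan is to treat each cross product as a single convolution and to bound the number of sequential convolution steps (its ``depth'') needed to make its two inputs available, exploiting the fact that the forward and backward products are produced by two chains that run concurrently without sharing data.

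First I would pin down the inputs of $c_j$. By statement~6 each interior cross product is the single convolution $c_j = f_j \star b_{n-3-j}$ for $1 \le j \le n-3$, with the boundary case $c_{n-2} = f_{n-2} \star z_n$ supplied by statement~7. Thus the two operands of $c_j$ are a forward product $f_j$ and a backward product $b_{n-3-j}$, with $z_n$ standing in for the empty backward product at the boundary. I would confirm, by expanding the operands as in the $n = 5$ instance of~(\ref{eqstatements}), that this pairing reproduces the partial derivative omitting a single variable, which certifies that the forward and backward indices have been matched correctly.

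Next I would count depths. The forward recurrence of statements~1--2 appends one factor per convolution, so $f_j$ is the result of $j$ consecutive convolutions and is ready after exactly $j$ steps; symmetrically the backward recurrence of statements~3--4 makes $b_{n-3-j}$ ready after exactly $n-3-j$ steps. The crux of the argument is that the two recurrences are data-independent: a forward product never feeds the backward chain and vice versa, so the two columns of~(\ref{eqstatements}) execute in parallel. Hence both inputs of $c_j$ are present once $\max(j,\,n-3-j)$ steps have completed, which is the asserted bound; the boundary term is consistent, since $f_{n-2}$ needs $n-2 = \max(n-2,-1)$ steps while $z_n$ is available from the outset.

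The main obstacle I expect is bookkeeping rather than the depth count itself: making the index arithmetic watertight. In particular I would verify that $n-3-j$ lands on a backward product that has actually been formed as $j$ runs over $1,\ldots,n-3$, reconcile the two extreme values of $j$ with the separate boundary rule of statement~7 so that the cases neither overlap nor leave a gap, and check that the in-place reuse of the slot $b_{n-2}$ in statement~5 does not disturb the backward values consumed by the cross products. Once the correct pairing and the independence of the two chains are in hand, the $\max$ bound is immediate, and the proposition establishes that scheduling the cross products adds no serial cost beyond the forward/backward depth already present.
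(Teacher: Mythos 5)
Your argument is essentially the paper's own proof: both count the depth of the forward chain ($f_j$ ready after $j$ steps) and the backward chain ($b_{n-3-j}$ ready after $n-3-j$ steps), note that the two chains are independent, take the maximum, and treat $c_{n-2} = f_{n-2}\star z_n$ as the boundary case needing $n-2$ steps. The extra bookkeeping you flag (data-independence of the chains, the in-place update of $b_{n-2}$ not touching the slots the cross products read) is left implicit in the paper but does not change the route.
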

\begin{proof}
For $n=3$, the only cross product is $c_1 := f_1 \star z_3$,
and $c_1$ can be computed after $f_1$ has been computed in the first step.
After one step, $c_1$ can be computed for~$n=3$.

For $n > 3$, consider $c_{n-2} := f_{n-2} \star z_n$.
The computation of $c_{n-2}$ has to wait for the computation of $f_{n-2}$,
which requires $n-2$ steps.
For $j < n-2$, $c_j := f_j \star b_{n-3-j}$ and $c_j$ can be computed
after $f_j$ and $b_{n-3-j}$ have been computed, which each take respectively
$j$ and $n-3-j$ steps.
Thus, after $\max(j, n-3-j)$ steps, $c_j$ can be computed.
\end{proof}

\begin{corollary} \label{cormon}
Given sufficiently many blocks of threads,
monomial evaluation and differentiation takes $n$ steps for $n$ variables.
\end{corollary}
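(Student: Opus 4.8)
The plan is to bound, for every quantity produced by the algorithm, the step at which it first becomes available, and then to show that the maximum such step is exactly $n$, attained by the forward product $f_n$; the hypothesis of sufficiently many blocks is what lets us identify the total number of steps with the length of the longest chain of dependent convolutions. First I would track the three sequences separately. The forward products form a strict chain $f_1 := a \star z_1$, $f_j := f_{j-1} \star z_j$, so $f_j$ is available after exactly $j$ steps and the monomial value $f_n = a z_1 z_2 \cdots z_n$ is available after $n$ steps. Because each $f_j$ depends on $f_{j-1}$, this same chain forces a matching lower bound: by this scheme the value cannot be produced in fewer than $n$ steps, which will give the "exactly $n$" part once the upper bound is in place.

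Next I would treat the backward products. The recursion $b_1 := z_n \star z_{n-1}$, $b_j := b_{j-1} \star z_{n-j}$ makes $b_j$ available after $j$ steps for $j \le n-2$, and the single extra update $b_{n-2} := b_{n-2} \star a$ of step~5 delays the final value of $b_{n-2}$ to step $n-1$. For the cross products I would simply invoke Proposition~\ref{propmon}: each $c_j$ becomes available after $\max(j, n-3-j)$ steps, which is at most $n-3$ for $1 \le j \le n-3$, together with $c_{n-2}$ after $n-2$ steps. Even allowing one convolution step beyond the moment both operands are ready, every backward and cross product is finished strictly before step $n$, so the only computation reaching step $n$ is $f_n$.

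The remaining step is to combine these bounds with the hypothesis on the number of blocks. The dependency structure is exactly that displayed by the step-aligned layout in~(\ref{eqstatements}): the forward and backward sequences depend only on the inputs $a$, $z_1, \ldots, z_n$ and on their own predecessors, while each cross product $c_j$ depends on an already-computed $f_j$ and on a backward product. With sufficiently many blocks, every convolution whose two operands have already been produced can be assigned its own block and executed in the current step, so no operation is ever postponed for lack of resources. Consequently the makespan equals the length of the longest chain of dependent convolutions, which the bookkeeping above identifies as the forward chain of length $n$. Together with the lower bound of the first paragraph, this gives exactly $n$ steps.

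I expect the main obstacle to be this last paragraph: making precise the informal phrase ``sufficiently many blocks'' and justifying that the makespan \emph{coincides with}, rather than merely is bounded by, the critical-path length. Concretely, one must verify that at each step the set of convolutions whose inputs are already available can all be scheduled simultaneously, i.e.\ that the only binding constraints are the data dependencies $f_{j-1} \to f_j$, $b_{j-1} \to b_j$, and $f_j, b_\cdot \to c_j$, and not any contention for blocks or for shared memory. Once this scheduling claim is granted, the numerical estimates of the first three paragraphs are routine and the conclusion follows.
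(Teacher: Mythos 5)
Your argument is correct and matches the reasoning the paper intends: the corollary is left unproved there precisely because it follows from Proposition~\ref{propmon} together with the observation that the forward chain $f_1 \to f_2 \to \cdots \to f_n$ is the longest dependency chain, all backward and cross products finishing by step $n-1$. Your added care in making the ``sufficiently many blocks'' hypothesis do the work of equating makespan with critical-path length is a welcome tightening of what the paper leaves implicit, but it is the same approach.
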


As the number of convolutions to evaluate and differentiate one monomial
in $n$ variables equals $3n - 3$,
Corollary~\ref{cormon} means that $3d$ is the upper bound on the speedup,
where $d$ is the truncation degree of the power series.

\section{Polynomial Evaluation and Differentiation}

We consider a polynomial $p$ of $N$ monomials
in $n$ variables with nonzero coefficients
as power series, all truncated to the same degree~$d$.
The coefficient of the $k$-th monomial is denoted as $a_k$
and $n_k$ variables appear in the monomial,
for $k$ ranging from 1 to $N$.
The variables in the $k$-th monomial
are defined by the tuple of indices $(i_1, i_2, \ldots, i_{n_k})$
with $1 \leq i_1 < i_2 < \cdots < i_{n_k} \leq n$.
We want to evaluate and differentiate
\begin{equation} \label{eqpolynomial}
   p(x_1,x_2,\ldots,x_n)
   = a_0 + \sum_{k=1}^N a_k ~\! x_{i_1} x_{i_2} \cdots x_{i_{n_k}},
\end{equation}
at a sequence of $n$ power series~$\z = (z_1, z_2, \ldots, z_n)$.
All series in $\z$ are also truncated to degree~$d$.
The constant term~$a_0$ is not included in the count~$N$.

In the data parallel evaluation and differentiation,
every monomial has three separate arrays 
of forward, backward, and cross products, denoted respectively
by $f$, $b$, and~$c$.
In the example below, the first index of $f$, $b$, and $c$
corresponds to the monomial index.

\begin{equation} \label{eqexpoly}
 \begin{array}{rccccccc}
   p & = ~ a_0~+~ 
         & a_1 x_1 x_3 x_6 & + & a_2 x_1 x_2 x_5 x_6 & + & a_3 x_2 x_3 x_4 \\
     &   & f_{1,1} := a_1 \star z_1 
     &   & f_{2,1} := a_2 \star z_1 
     &   & f_{3,1} := a_3 \star z_2 \\
     &   & f_{1,2} := f_{1,1} \star z_3 
     &   & f_{2,2} := f_{2,1} \star z_2 
     &   & f_{3,2} := f_{3,1} \star z_3 \\
     &   & f_{1,3} := f_{1,2} \star z_6 
     &   & f_{2,3} := f_{2,2} \star z_5 
     &   & f_{3,3} := f_{3,2} \star z_4 \\
     &   & 
     &   & f_{2,4} := f_{2,3} \star z_6
     &   &  \\
     &   & b_{1,1} := z_6 \star z_3 
     &   & b_{2,1} := z_6 \star z_5 
     &   & b_{3,1} := z_4 \star z_3 \\
     &   & b_{1,1} := b_{1,1} \star a_1 
     &   & b_{2,2} := b_{2,1} \star z_2
     &   & b_{3,1} := b_{3,1} \star a_3 \\
     &   & 
     &   & b_{2,2} := b_{2,2} \star a_2
     &   &  \\
     &   & c_{1,1} := f_{1,1} \star z_6 
     &   & c_{2,1} := f_{2,1} \star b_{2,1}
     &   & c_{3,1} := f_{3,1} \star z_4 \\
     &   & 
     &   & c_{2,2} := f_{2,2} \star z_6
 \end{array}
\end{equation}
The 21 convolutions in~(\ref{eqexpoly} are arranged in~(\ref{eqparallel}).
Statements on the same line can be computed in parallel.
\begin{equation} \label{eqparallel}
  \begin{array}{ccccccccc}
     f_{1,1} & b_{1,1} & & f_{2,1} & b_{2,1} & & f_{3,1} & b_{3,1} \\
     f_{1,2} & b_{1,1} & c_{1,1} & f_{2,2} & b_{2,2} & c_{2,1}
   & f_{3,2} & b_{3,1} & c_{3,2} \\
     f_{1,3} & & & f_{2,3} & b_{2,2} & c_{2,2} & f_{3,3} & & \\
      & & & f_{2,4} & & & & & \\
  \end{array}
\end{equation}
If 9 thread blocks are available, all 21 convolutions can
be computed in 4 steps.  The value of $p$ at $\z$ and all six
partial derivatives are listed below:
\begin{equation}
  a_0 + f_{1,3} + f_{2,4} + f_{3,3}, \quad
  b_{1,1} + b_{2,1}, \quad
  c_{2,1} + b_{3,1}, \quad
  c_{1,1} + c_{3,1}, \quad
  f_{3,2}, \quad
  c_{2,2}, \quad
  f_{1,2} + f_{2,3}.
\end{equation}
If 7 threads blocks are available, all values can be computed in two steps. 

Two steps are needed for the value of $p$.  The first step computes
$f_{1,3} := a_0 + f_{1,3}$ and $f_{3,3} := f_{2,4} + f_{3,3}$ simultaneously.
The second step then does $f_{3,3} := f_{3,3} + f_{1,3}$,
so the value of~$p$ is in $f_{3,3}$.

Obviously, as convolutions for different monomials can be computed
in parallel, the result in Corollary~\ref{cormon} extends directly
to polynomials.
\begin{corollary} \label{corpol}
Consider a polynomial~$p$ in $n$ variables, with $N$ monomials.
Let $m$ be the number of variables in that monomial of~$p$ that has the
largest number of variables.
Given sufficiently many blocks of threads,
the evaluation and differentiation of~$p$ takes 
$m + \lceil ~\! \log_2(N) ~\! \rceil$ steps.
\end{corollary}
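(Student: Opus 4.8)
The plan is to combine two independent parts: the cost of evaluating and differentiating each monomial, which is governed by Corollary~\ref{cormon}, and the cost of summing the contributions across all $N$ monomials into the value of~$p$ and each of its partial derivatives. Since convolutions for distinct monomials are mutually independent, with enough thread blocks all monomials are processed simultaneously, so the monomial phase contributes $m$ steps, where $m$ is the maximum number of variables occurring in any single monomial.

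First I would invoke Corollary~\ref{cormon}: for a monomial in $n_k$ variables, evaluation and differentiation takes $n_k$ steps given sufficiently many blocks. Running all $N$ monomials in parallel, the monomial phase finishes after $\max_k n_k = m$ steps. At that point, for every monomial we hold its evaluated value and its partial derivatives (with respect to each of its variables) as truncated power series; the value of~$p$ and each partial derivative of~$p$ are obtained by summing the relevant per-monomial series, together with the constant term~$a_0$ for the value itself.

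Second I would analyze the addition phase. The value of $p$ is a sum of at most $N+1$ power series (the $N$ monomial values plus $a_0$), and each partial derivative $\partial p/\partial x_i$ is a sum of at most $N$ monomial contributions. Because the addition of two power series is a single parallel step (one block of threads adds corresponding coefficients, as described at the end of Section~2), summing $N$ series by a balanced binary reduction tree takes $\lceil \log_2(N) \rceil$ steps. Crucially, the reductions for the value and for all the partial derivatives are independent of one another, so with sufficiently many blocks they all run concurrently within the same $\lceil \log_2(N) \rceil$ steps; the folding in of $a_0$ can be absorbed into the tree without increasing its depth. Adding the $m$ steps of the monomial phase to the $\lceil \log_2(N) \rceil$ steps of the reduction phase yields the claimed total of $m + \lceil \log_2(N) \rceil$.

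The main obstacle I anticipate is accounting carefully for the reduction depth rather than the monomial phase, which follows immediately from Corollary~\ref{cormon}. One must confirm that summing up to $N$ (or $N+1$) terms really costs only $\lceil \log_2(N) \rceil$ parallel additions and that the constant term does not push this to $\lceil \log_2(N+1) \rceil$; the worked example with $N+1 = 4$ value-terms combined in two steps is the guiding case here. I would also note that the result is stated as an upper bound ("takes $m + \lceil \log_2(N) \rceil$ steps") under the idealized assumption of unlimited blocks and a uniform cost of one step per convolution or per coordinatewise addition, so no hidden per-coefficient work needs to be tracked; this matches the cost model already established for the convolution and addition primitives in Section~2.
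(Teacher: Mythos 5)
Your proof matches the paper's own (largely implicit) argument: the paper likewise observes that convolutions for distinct monomials run concurrently, so Corollary~\ref{cormon} gives the $m$-step monomial phase, and then appeals to the tree summation of the $N$ evaluated monomials (with $a_0$ folded in, as in the worked example) for the $\lceil \log_2(N) \rceil$ addition steps. Your write-up is simply a more explicit version of the same decomposition, including the correct observation that the constant term does not increase the reduction depth.
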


To interpret Corollary~\ref{corpol}, assume every monomial has $m$
variables.  Then the total number of convolutions equals~$N (3m-3)$
and the number of additions equals~$N$.
As in the case of one monomial, the speedup factor of~$3d$ is present.
For polynomials with many monomials, for $N \gg m$,
the upper bound on the speedup is $d N/\log_2(N)$.

\section{Accelerated Evaluation and Differentiation}

The accelerated polynomial evaluation and differentiation algorithm
proceeds in two stages.  The first stage computes all convolutions.
The second stage adds up the evaluated and differentiated monomials.

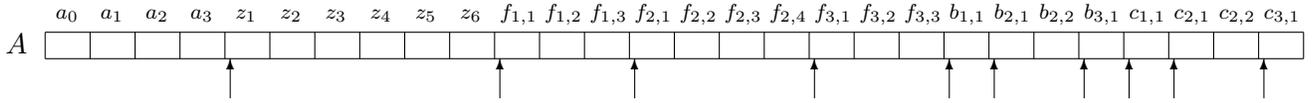
\begin{figure}[t!]
\begin{center}
\begin{picture}(476,40)(0,0)
\put(-15,17){$A$}
\put(4,30){\scriptsize $a_0$}
\put(21,30){\scriptsize $a_1$}
\put(38,30){\scriptsize $a_2$}
\put(55,30){\scriptsize ${a_3}$}
\put(72,30){\scriptsize ${z_1}$}
\put(89,30){\scriptsize ${z_2}$}
\put(106,30){\scriptsize ${z_3}$}
\put(123,30){\scriptsize ${z_4}$}
\put(140,30){\scriptsize ${z_5}$}
\put(157,30){\scriptsize ${z_6}$}
\put(172,30){\scriptsize $f_{1,1}$}
\put(189,30){\scriptsize $f_{1,2}$}
\put(206,30){\scriptsize $f_{1,3}$}
\put(223,30){\scriptsize $f_{2,1}$}
\put(240,30){\scriptsize $f_{2,2}$}
\put(257,30){\scriptsize $f_{2,3}$}
\put(274,30){\scriptsize $f_{2,4}$}
\put(291,30){\scriptsize $f_{3,1}$}
\put(308,30){\scriptsize $f_{3,2}$}
\put(325,30){\scriptsize $f_{3,3}$}
\put(342,30){\scriptsize $b_{1,1}$}
\put(359,30){\scriptsize $b_{2,1}$}
\put(376,30){\scriptsize $b_{2,2}$}
\put(393,30){\scriptsize $b_{3,1}$}
\put(410,30){\scriptsize $c_{1,1}$}
\put(427,30){\scriptsize $c_{2,1}$}
\put(444,30){\scriptsize $c_{2,2}$}
\put(461,30){\scriptsize $c_{3,1}$}
\put(0,25){\line(1,0){476}}
\put(0,15){\line(1,0){476}}
\put(0,15){\line(0,1){10}}
\put(17,15){\line(0,1){10}}
\put(34,15){\line(0,1){10}}
\put(51,15){\line(0,1){10}}
\put(68,15){\line(0,1){10}}
\put(85,15){\line(0,1){10}}
\put(102,15){\line(0,1){10}}
\put(119,15){\line(0,1){10}}
\put(136,15){\line(0,1){10}}
\put(153,15){\line(0,1){10}}
\put(170,15){\line(0,1){10}}
\put(187,15){\line(0,1){10}}
\put(204,15){\line(0,1){10}}
\put(221,15){\line(0,1){10}}
\put(238,15){\line(0,1){10}}
\put(255,15){\line(0,1){10}}
\put(272,15){\line(0,1){10}}
\put(289,15){\line(0,1){10}}
\put(306,15){\line(0,1){10}}
\put(323,15){\line(0,1){10}}
\put(340,15){\line(0,1){10}}
\put(357,15){\line(0,1){10}}
\put(374,15){\line(0,1){10}}
\put(391,15){\line(0,1){10}}
\put(408,15){\line(0,1){10}}
\put(425,15){\line(0,1){10}}
\put(442,15){\line(0,1){10}}
\put(459,15){\line(0,1){10}}
\put(476,15){\line(0,1){10}}
\put(70,0){\vector(0,1){15}}  
\put(172,0){\vector(0,1){15}} 
\put(223,0){\vector(0,1){15}} 
\put(291,0){\vector(0,1){15}} 
\put(342,0){\vector(0,1){15}} 
\put(359,0){\vector(0,1){15}} 
\put(393,0){\vector(0,1){15}} 
\put(410,0){\vector(0,1){15}} 
\put(427,0){\vector(0,1){15}} 
\put(461,0){\vector(0,1){15}} 
\end{picture}
\caption{The data array used to compute 
the forward, backward, and cross products
to evaluate a polynomial and its gradient 
of the example in~(\ref{eqexpoly}).
Every box represents $d+1$ doubles for the coefficients of a series
truncated at degree~$d$.  The arrows point at the start position of
the input series and at the space for the forward, backward, cross products 
for every monomial. }
\label{figdatapoly}
\end{center}
\end{figure}

As the number of threads in each block matches the number of
coefficients in each truncated power series,
within each block the natural order of the data follows
the coefficient vectors of the power series.
In preparation for the launching of the kernels 
the staging of the data must be defined.

Extracting the data structures of~(\ref{eqpolynomial}),
the input of the algorithm consists of the following:
\begin{enumerate}
\item $N$, the number of monomials; 
\item $n$, the number of variables;
\item $d$, the degree at which all series are truncated;
\item $a_k$, truncated series as the coefficient of the $k$-th monomial,
      $k = 1, 2, \ldots, N$;
\item $(i_1, i_2, \ldots, i_{n_k})$,
      indices of the variables in the $k$-th monomial,
      where $1 \leq i_1 < i_2 < \cdots < i_{n_k} \leq n$,
      for $k = 1,2, \ldots, N$; 
\item $\z = (z_1, z_2, \ldots, z_n)$,
      $n$ power series truncated to degree~$d$.
\end{enumerate}
The output of the first stage are $N$ tuples
$(f_k, b_k, c_k)$, for $k=1,2,\ldots,N$,
where $f_{k,j}$ are the forward products, $j=1,2,\ldots,n_k$,
$b_{k,j}$ are the backward products,
$j=1,2,\ldots, \max(1, n_k-2)$,
$c_{k,j}$ are the cross products, $j=1,2,\ldots, n_k-2$.
The $\max(1, n_k-2)$ in the upper bound for $b_{k,j}$ is
for the special case $n_k=2$, to store $z_{i_2} \star a_2$.

The output of the first stage is the input of the second stage.
The second stage adds for each monomial $k$, the last forward product
$f_{k, n_k}$ to obtain $p(\z)$.
The values of the derivatives of the $k$-th monomial are in
$f_{k,n_k-1}$, $b_{k,n_k-2}$, and $c_{k,j}$.

The data parallel algorithm to compute all convolutions is defined
by the data layout.
The total count of numbers involved in all convolution 
and addition jobs is
\begin{equation} \label{eqcountnum}
  \begin{array}{l}
  {\displaystyle
   e = (d+1)
   \left( 1 + N + n
     + \sum_{k=1}^N \left( n_k 
     \phantom{\sum_{k=1}^N} \!\!\!\!\!\!\!\!
              + \max(1, n_k-2) + \max(0, n_k-2) 
\phantom{\sum_{k=1}^N} \!\!\!\!\!\!\!\!
\right)
   \right).}
  \end{array}
\end{equation}
The first factor~$(d+1)$ in~(\ref{eqcountnum}) counts the number
of coefficients in all series truncated to degree~$d$.
The five terms in the second factor in~(\ref{eqcountnum}) count
respectively the constant coefficient $a_0$,
the $N$ coefficients $a_k$, the $n$ input series in $\z$,
the $n_k$ forward, the $\max(1,n_k-2)$ backward, and
the $\max(0,n_k-2)$ cross products.

Figure~\ref{figdatapoly} illustrates the layout of the data vector
for the polynomial in~(\ref{eqexpoly}).

The data are in an array $A$ of $e$ doubles.
The order of the numbers in~$A$ follows the count as in~(\ref{eqcountnum}):
the coefficients $a_0$, $a_k$, and series $\z$ are followed by the numbers
in the forward, backward, and cross products.
Each job is then characterized by a triplet of indices in this~$A$.
The first two indices point respectively at the start of the first and
the second input.  The third index in the triplet defines the start
of the output.   For example, the triplet for the convolution
$f_{1,1} := a_1 \star z_1$ for the polynomial in~(\ref{eqexpoly})
is $(d+1, 4d+4, 10d+10)$, for degree $d$, 
as the coefficients for $a_1$ start after
the first $d+1$ coefficients for $a_0$, $z_1$ starts after the the
first four series, and $f_{1,1}$ after the ten series that define
the coefficients of the polynomials and the input.

For complex numbers, the data are in two arrays, one for the real parts
and the other for the imaginary parts.  For $m$-fold double numbers,
there are $m$ data arrays, all following the same layout as for~$A$,
described above.

Jobs are placed in layers, following the lines of~(\ref{eqparallel})
for the example polynomial.
Jobs to compute $f_{k,j}$ and $b_{k,j}$ are at layer~$j$.
Following Proposition~\ref{propmon},
the layer of $c_{k,j}$ is $\max(j, n_k-3-j) + 1$.
All jobs in the same layer can be executed at the same time.
A kernel is launched with as many blocks as the number of jobs in one layer.
One convolution job is executed by one block of threads.
In addition to the data array~$A$, the kernel is launched with 
the triple of coordinates which define each job.
Each block of threads extracts the triplet according to its block number.

The coordinates for each convolution job in the first stage
of the algorithm depend only on the structure of the monomials 
and are computed only once.
The same holds for the coordinates of the addition jobs in the second
stage of the algorithm.  Each addition job updates one series with another,
so one pair of indices defines one addition job.
For example, for the polynomial in~(\ref{eqexpoly}),
the first update $f_{1,3} := a_0 + f_{1,3}$ has coordinates $(0, 12d + 12)$,
as $a_0$ comes first and $f_{1,3}$ is positioned after 12 series in $A$,
see Figure~\ref{figdatapoly}.

For a convolution job $j$, 
let $\bft = (t_1(j), t_2(j), t_3(j))$ denote the triplet
where $t_1(j)$ and $t_2(j)$ respectively are the locations in the
data array~$A$ of the first and second input, and
where $t_3(j)$ is the location in~$A$ of the output.
Given $A$ and $\bft$, we can then symbolically summarize the code
for the kernel to compute all convolution jobs at the same layer as
\begin{enumerate}
\item $B := \mbox{\tt blockIdx.x}$
\item $(i_1,i_2,i_3) :=  (t_1(B), t_2(B), t_3(B))$
\item $A[i_3:i_3\!+\!d\!+\!1]
      := A[i_1:i_1\!+\!d\!+\!1] \star A[i_2:i_2\!+\!d\!+\!1]$
\end{enumerate}
where the block index $B$ corresponds to the index $j$ of the convolution job,
and $[i\!:\!i\!+\!d\!+\!1]$ denotes the range of the coefficients in~$A$.
Likewise, for an addition job $j$,
let $\bft = (t_1(j), t_2(j))$ be the pair of input and update locations
in the data array~$A$.  Given $A$ and $\bft$, the kernel to execute
all addition jobs at the same layer is then summarized as
\begin{enumerate}
\item $B := \mbox{\tt blockIdx.x}$
\item $(i_1, i_2) := (t_1(B),t_2(B))$
\item $A[i_2:i_2\!+\!d\!+\!1]
      := A[i_2:i_2\!+\!d\!+\!1] + A[i_1:i_1\!+\!d\!+\!1]$
\end{enumerate}

The data staging algorithm to define the convolution jobs runs through 
the steps to compute all forward $f_{k,\ell}$, backward $b_{k,\ell}$, 
and cross products $c_{k,\ell}$,
for $k$ ranging from 1 to~$N$.  The $k$-th monomial has $n_k$ variables,
with indices $(i_1,i_2,\ldots,i_{n_k})$.  
\begin{equation}
   \alpha_k = \sum_{\ell = 1}^{k-1} n_\ell, \quad
   \beta_k = \alpha_{N+1} + \sum_{\ell = 1}^{k-1} \max(1,n_\ell-2),
   \mbox{and} \quad
   \gamma_k = \beta_{N+1} + \sum_{\ell = 1}^{k-1} \max(0,n_\ell-2)
\end{equation}
mark the positions respectively of $f_{k,1}$, $b_{k,1}$,
and $c_{k,1}$ in the data array~$A$.
For an example, see the arrows in Figure~\ref{figdatapoly}.
Let $J_\ell$ denote a set of convolutions jobs at level~$\ell$.
Jobs in $J_\ell$ can be executed after $\ell-1$ steps.
The simplified pseudo code below assumes all $n_k > 2$.

\noindent For $k$ from 1 to $N$ do
\begin{enumerate}
\item to execute $f_{k,1} := a_k \star z_{i_1}$:

      ~~~~$t_1 := k(d\!+\!1)$

      ~~~~$t_2 := (1+N+i_1-1)(d\!+\!1)$

      ~~~~$t_3 := (1+N+n+\alpha_k)(d\!+\!1)$

      ~~~~$J_1 := J_1 \cup \{ (t_1, t_2, t_3) \}$

\item for $\ell$ from 2 to~$n_k$ do 

      ~~~~to execute $f_{k,\ell} := f_{k,\ell-1} \star z_{i_\ell}$:

      ~~~~$t_1 := (1+N+n+\alpha_k+\ell-2)(d\!+\!1)$

      ~~~~$t_2 := (1+N+i_\ell-1)(d\!+\!1)$

      ~~~~$t_3 := (1+N+n+\alpha_k+\ell-1)(d\!+\!1)$

      ~~~~$J_\ell := J_\ell \cup \{ (t_1,t_2,t_3) \}$
\item to execute $b_{k,1} := z_{n_k} \star z_{n_k-1}$:

      ~~~~$t_1 := (1+N+n_k-1)(d\!+\!1)$

      ~~~~$t_2 := (1+N+n_k-2)(d\!+\!1)$

      ~~~~$t_3 := (1+N+n+\beta_k)(d\!+\!1)$

      ~~~~$J_1 := J_1 \cup \{ (t_1, t_2, t_3) \}$
\item for $\ell$ from 2 to $n_k-2$ do

      ~~~~to execute $b_{k,\ell} := b_{k,\ell-1} \star z_{n_k-\ell}$:

      ~~~~$t_1 := (1+N+n+\beta_k+\ell-2)(d\!+\!1)$

      ~~~~$t_2 := (1+N+n_k-\ell)(d\!+\!1)$

      ~~~~$t_3 := (1+N+n+\beta_k+\ell-1)(d\!+\!1)$

      ~~~~$J_\ell := J_\ell \cup \{ (t_1, t_2, t_3) \}$
\item to execute $b_{k,n_k-2} := b_{k,n_k-2} \star a_k$:

      ~~~~$t_1 := (1+N+n+\beta_k+n_k-3)(d\!+\!1)$

      ~~~~$t_2 := k(d\!+\!1)$

      ~~~~$t_3 := (1+N+n+\beta_k+n_k-3)(d\!+\!1)$

      ~~~~$J_{n_k-2} := J_{n_k-2} \cup \{ (t_1, t_2, t_3) \}$

\item for $\ell$ from 1 to $n_k-3$ do

      ~~~~to execute $c_{k,\ell} := f_{k,\ell} \star b_{k,n_k-3-\ell}$

      ~~~~$t_1 := (1+N+n+\alpha_\ell+\ell-1)(d\!+\!1)$

      ~~~~$t_2 := (1+N+n+\beta_k+n_k-3-\ell-1)(d\!+\!1)$
 
      ~~~~$t_3 := (1+N+n+\gamma_k+\ell-1)(d\!+\!1)$

      ~~~~$L := \max(\ell, n_k-3-\ell)$

      ~~~~$J_L := J_L \cup \{ (t_1, t_2, t_3) \}$

\item to compute $c_{k,n_k-2} := f_{k,n_k-2} \star z_{n_k}$:

      ~~~~$t_1 := (1+N+n+\alpha_k+n_k-3)(d\!+\!1)$

      ~~~~$t_2 := (1+N+n_k-1)(d\!+\!1)$

      ~~~~$t_3 := (1+N+n+\gamma_k+n_k-3)(d\!+\!1)$

      ~~~~$J_{n_k-2} := J_{n_k-2} \cup \{ (t_1, t_2, t_3) \}$

\end{enumerate}
Sets are natural data structures in the mathematical description
of the data staging algorithm, as the jobs in one $J_\ell$ can be
executed in any order.  In the implementation, the jobs in the same
layer are stored in three integer arrays.  
The block index $B$ is then used to get the coordinates
of the convolution job performed by block~$B$.

The pairs of indices for the addition jobs are defined
in a recursive manner, following the order of the tree summation algorithm.
To compute the value of the polynomial, 
add the forward product $f_{k,n_k}$ of the $k$-th monomial.
For stride~$L$ and level~$\ell$, apply the following:

~~~~to execute $f_{k,n_k} := f_{k,n_k} + f_{k-L,n_{k-L}}$:

~~~~~~~~$t_1 := 1 + N + n + \alpha_{k-L} + n_{k-L} - 1$

~~~~~~~~$t_2 := 1 + N + n + \alpha_k + n_k - 1$

~~~~~~~~$J_\ell := J_\ell \cup \{ (t_1, t_2) \}$

\noindent recursively, 
starting at $L = \lfloor N/2 \rfloor$ 
and level $\ell = \log_2(N)$ (assuming $N = 2^\ell$),
dividing $L$ by two in each step, until $L = 1$.
For $k = L$ in the formula
$f_{k,n_k} := f_{k,n_k} + f_{k-L,n_{k-L}}$,
replace $f_{k-L,n_{k-L}}$ by $a_0$.
The same recursive formula is applied to sum the first backward products
and all cross products to obtain the gradient.

\section{Computational Results}

\subsection{Equipment and Test Polynomials}

Table~\ref{tabgpus} summarizes the characteristics of each GPU,
with the focus on the core counts and the processor speeds,
because the problem is compute bound.
The first four GPUs in Table~\ref{tabgpus} are housed in a Linux
workstation, running CentOS.  The fifth GPU resides in a Windows laptop.

\begin{table}[hbt]
\begin{center}
\begin{tabular}{r||r|r|r|r|r||r}
  NVIDIA GPU~~~ &  CUDA  & \#MP  & \#cores/MP & \#cores & GHz 
 & host CPU GHz~~ \\ \hline
      Tesla C2050 & 2.0~~ &  14~~ &    32~~~~~ &   448~ & 1.15 
 & Intel X5690 3.47 \\
      Kepler K20C & 3.5~~ &  13~~ &   192~~~~~ &  2496~ & 0.71 
 & Intel E5-2670 2.60 \\
      Pascal P100 & 6.0~~ &  56~~ &    64~~~~~ &  3584~ & 1.33 
 & Intel E5-2699 2.20 \\
      Volta V100 & 7.0~~ &  80~~ &    64~~~~~ &  5120~ & 1.91 
 & Intel W2123 3.60 \\
 GeForce RTX 2080 & 7.5~~ &  46~~ &    64~~~~~ &  2944~ & 1.10 
 & Intel i9-9880H 2.30
\end{tabular}
\caption{The columns list the CUDA capability, 
the number of multiprocessors, the number of cores per multiprocessor,
the total number of cores, and the GPU clock rate.
For every GPU, its host CPU is listed with its clock rate,
and the host processor.}
\label{tabgpus}
\end{center}
\end{table}

The software was developed on the five GPUs listed in Table~\ref{tabgpus},
compiled with {\tt nvcc -O3} on the device,
with the code for the host compiled by {\tt gcc -O3} on the linux computers, 
and the community edition of Microsoft Visual Studio on the Windows laptop.
While running the same software on all five GPUs is obviously convenient,
more advanced features of newer devices are not utilized.
The main importance for the evaluation of our software is that the
unfair comparison with the CPU is avoided.
Taking into the account the double peak performance of the P100
and the V100 (4.7 TFLOPS and 7.9 TFLOPS respectively),
we may expect the V100 to be about 1.68 times faster than the P100.

The first test polynomial $p_1$ is a function of 16 variables.
Its evaluation adds to the constant term all 1,820 monomials
that are the products of exactly four variables.
The evaluation requires 16,380 convolutions and 9,084 additions.  
As each monomial has no more than four variables,
the 16,380 convolutions are performed in four kernel launches of
respectively 3,640, 5,460, 5,460, and 1,820 blocks.
The execution of the 9,084 additions requires 11 kernel launches of
respectively 4,542, 2,279, 1,140, 562, 281, 140, 78, 39, 20, 2,
and 1 blocks.  The second test polynomial $p_2$ is constructed
to require many more convolutions than additions,
respectively 24,192 versus 8,192.  To evaluate the third polynomial~$p_3$,
as many convolutions as additions are required: 24,256.

Table~\ref{tabtestpols} lists the characteristics of the test polynomials.
Compared to~$p_1$, 
$p_2$ has fewer monomials but each monomial has many more variables;
whereas $p_3$ has many more monomials, but each monomial has only
two variables.

\begin{table}[hbt]
\begin{center}
\begin{tabular}{r|rrr|rr}
        & \multicolumn{1}{c}{$n$} & \multicolumn{1}{c}{$m$} & 
          \multicolumn{1}{c|}{$N$} & \multicolumn{1}{c}{\#cnv} & 
          \multicolumn{1}{c}{\#add}  \\ \hline
  $p_1$ &  16 &  4 & 1,820 & 16,380 & 9,084 \\ 
  $p_2$ & 128 & 64 &   128 & 24,192 & 8,192 \\
  $p_3$ & 128 &  2 & 8,128 & 24,256 & 24,256 \\
\end{tabular}
\caption{For each polynomial, $n$ is the total number of variables,
$m$ is the number of variables per monomial, and
$N$ is the number of monomials (not counting the constant term).
The last two columns list the number of convolution and addition jobs.}
\label{tabtestpols}
\end{center}
\end{table}

To examine the scalability of our problem,
experiments are run for increasing degrees of truncation and 
for increasing levels of precision.
Does the shape of the test polynomials influence the execution times?

\subsection{Performance}

For each run, four times are reported.
The elapsed times of the kernel launches are measured by
{\tt cudaEventElapsedTime} and expressed in milliseconds.
The first two times are the sums of all elapsed times spent by all kernels,
respectively for all convolutions and all additions.
The third time is the sum of the first two times.
Each kernel launches involves the memory transfer of the index vectors
that define the coordinates of the jobs in the data arrays.
The fourth reported is the wall clock time which includes also
this memory transfer.  What is not included in the times is the
transfer of the input data arrays from the host to the device
and of the output data arrays from the device to the host.

Table~\ref{tabcards} summarizes execution times to evaluate
the first test polynomial~$p_1$ at a power series truncated
to degree~152 in deca double precision.
This degree is the largest one block of threads can manage
because of the limitation of the size of shared memory,
which is the same all five devices.

\begin{table}[hbt]
\begin{center}
\begin{tabular}{r||r|r|r|r|r}
     & \multicolumn{1}{c|}{C2050}
     & \multicolumn{1}{c|}{K20C}
     & \multicolumn{1}{c|}{P100}
     & \multicolumn{1}{c|}{V100}
     & \multicolumn{1}{c}{$\!\!$RTX 2080} \\ \hline \hline
$\!\!$convolution & 12947.26 & 11290.22 & 1060.03 & 634.29 & 10002.32 \\
 addition &    10.72 &    11.13 &    1.37 &   0.77 &     5.01 \\ \hline
 sum & 12957.98 & 11301.35 & 1061.40 & 635.05 & 10007.34 \\ \hline \hline
wall clock & 12964.00 & 11309.00 & 1066.00 & 640.00 & 10024.00 
\end{tabular}
\caption{Evaluating $p_1$ for degree $d = 152$ in deca double precision.
The last line is the wall clock time for all convolution and
addition kernels.  All units are milliseconds.}
\label{tabcards}
\end{center}
\end{table}

The ratio $12964/640 \approx 20.26$ is the speedup of the most
recent V100 over the oldest C2050.
Compare the ratio of the wall clock times for P100 over V100 
in Table~\ref{tabcards}: $1066/640 \approx 1.67$
with the ratios of theoretical double peak performance of the
V100 of the P100: $7.9/4.7 \approx 1.68$.

In about one second, the P100 performed 16,380 convolutions
and 9,084 additions (see Table~\ref{tabtestpols})
to evaluate and differentiate~$p_1$, at series truncated
at degree~$d = 152$ in deca double precision.
Following the counts in~\cite{Ver20}, one addition in deca double
precision requires 139 additions and 258 subtractions of doubles,
while one deca double multiplication requires 952 additions,
1743 subtractions, and 394 multiplications of doubles.
One convolution with zero insertion on series truncated at degree $d$
requires $(d+1)^2$ multiplications and $d(d+1)$ additions.
One addition of two series truncated at degree $d$ requires
$d+1$ additions.  So we have $16,380 (d+1)^2$ multiplications
and $16,380 d(d+1) + 9,084(d+1)$ additions in deca double precision.
One multiplication and one addition in deca double precision 
require respectively 3089 and 397 double operations.
Then the $16,380 (d+1)^2$ evaluates to 1,184,444,368,380
and $16,380 d(d+1) + 9,084(d+1)$ to 151,782,283,404
double float operations.  
In total, in 1.066 seconds the P100 performed 1,336,226,651,784
double float operations,
reaching a performance of about 1.25 TFLOPS.

Another observation from Table~\ref{tabcards} is the tiny
amount of time spent by the addition kernels, when compared
to the convolution kernels, for V100: 0.77 versus 634.29.
The convolution is quadratic in the degree~$d$, whereas
the addition is linear in~$d$.  As every block has $d+1$ threads, 
the addition finishes in one single step, whereas there are
still $d$ steps in the convolutions.

Table~\ref{tabp100} lists the execution times for~$p_2$
and~$p_3$ on P100 and V100, to verify if the shape of
the test polynomial would influence the conclusions on~$p_1$.

\begin{table}[hbt]
\begin{center}
\begin{tabular}{r||r|r||r|r}
     & \multicolumn{2}{c||}{$p_2$}
     & \multicolumn{2}{c}{$p_3$} \\ \cline{2-5}
     & \multicolumn{1}{c|}{P100}
     & \multicolumn{1}{c||}{V100}
     & \multicolumn{1}{c|}{P100}
     & \multicolumn{1}{c}{V100} \\ \hline \hline
 convolution & 1700.49 & 1115.03 & 1566.58 & 926.53 \\
 addition &       1.24 &    0.67 &    3.43 &   1.92 \\ \hline
 sum &         1701.72 & 1115.71 & 1570.01 & 928.45 \\ \hline \hline
wall clock &   1729.00 & 1142.00 & 1583.00 & 941.00
\end{tabular}
\caption{Evaluating $p_2$ and~$p_3$
for degree $d = 152$ in deca double precision.
The last line is the wall clock time for all convolution and
addition kernels.  All units are milliseconds.}
\label{tabp100}
\end{center}
\end{table}
The ratios of the wall clock times on P100 over the V100
for~$p_2$ and~$p_3$ are respectively
$1729/1142 \approx 1.51$ and $1583/941 \approx 1.68$.

For~$p_2$, the factor 1.51 is not as high as expected.
One probable cause is that the number of convolutions jobs 
in the first 31 layers equals 256.
This number equals the number of blocks in one kernel launch.
The number of streaming multiprocessors of the P100 and V100
respectively equal 56 and~80.
The number of 256 blocks in one launch does not occupy the V100
as much as the P100.

\subsection{Scalability}

The plots in this section visualize data in
Tables~\ref{tabV100runs1} and~\ref{tabV100runs2},
which contain times on the three test polynomials, on the V100.
These raw data sets are in the appendix.

As the times spent by all addition kernels is less than one
millisecond for~$p_1$, Figure~\ref{figaddp1} shows the relative cost
of the multiple doubles versus doubles.  
The cost starts to increase once the degrees become larger than the warp size.
For all precisions, the cost at degree 127 is less than twice
the cost at degree 63.

\begin{figure}[hbt]
\begin{center}
{\includegraphics[width=9cm]{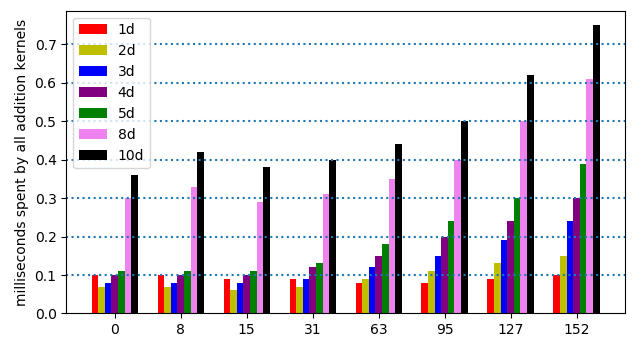}}
\caption{Times spent by all addition kernels when
evaluating~$p_1$ and its gradient at power series 
truncated at increasing degrees 0, 8, 15, 31, 63, 95, 127, and 152, 
for seven precisions: double (1d), 2d, 3d, 4d, 5d, 8d, and 10d,
at power series truncated to degree~191.}
\label{figaddp1}
\end{center}
\end{figure}

Figure~\ref{figaddp1p2p3} shows the sum of the times spent by all
addition kernels, for the three test polynomials,
for power series truncated at degree 152, for all seven precisions.
Although $p_3$ has 8,128 monomials and $p_2$ has only 128,
the increase in addition times for $p_3$ is at most three times as 
much as for~$p_2$.
The addition for $p_3$ happens with 12 kernel launches,
while the addition for $p_2$ has 7 kernel launches.

\begin{figure}[hbt]
\begin{center}
{\includegraphics[width=9cm]{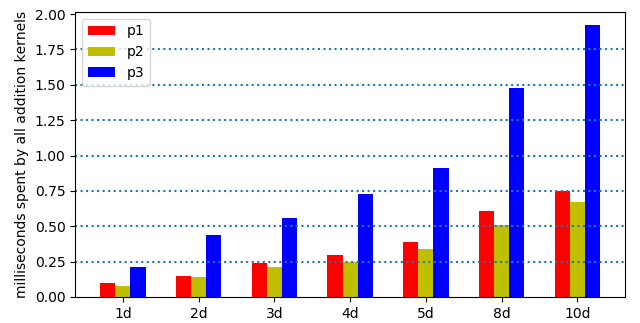}}
\caption{Times spent by all addition kernels when 
evaluating~$p_1$, $p_2$, $p_3$ and their gradients at
power series truncated at degrees~152,
for seven precisions: double (1d), 2d, 3d, 4d, 5d, 8d, and 10d.}
\label{figaddp1p2p3}
\end{center}
\end{figure}

The percentage of time spent by all kernels over the wall clock time
is visualized in Figure~\ref{figsumwall}.
For double precision, the wall clock time dominates 
(the percentage of the time spent by all kernels is less than 10\%),
although the time is also less than a tenth of a millisecond.
This percentage climbs for higher precisions. 
In triple precision, the time spent on all kernels dominates 
the wall clock time.  For octo and deca double precision, 
this percentage is more than~95\%.

\begin{figure}[hbt]
\begin{center}
{\includegraphics[width=9cm]{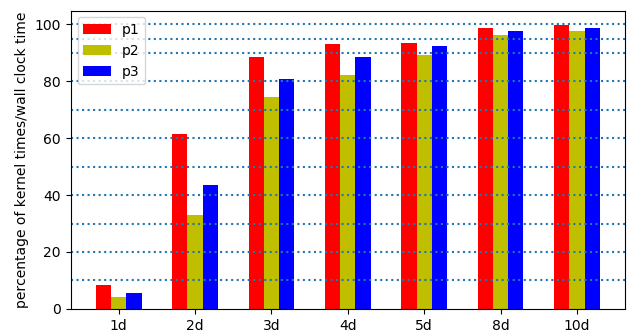}}
\caption{Percentage of the time spent by all kernels over
the wall clock time when evaluating~$p_1$, $p_2$, $p_3$ 
and their gradients at power series truncated at degrees~152,
for seven precisions: double (1d), 2d, 3d, 4d, 5d, 8d, and 10d.}
\label{figsumwall}
\end{center}
\end{figure}

As the precision increases, the problem becomes more and more compute bound.
For degree~191, in Table~\ref{tabV100runs1}, for $p_1$, the wall clock times
in double, double double, quad double, and octo double are respectively
6, 14, 95, and 449 seconds.  The cost overhead factor of double double
over double is typically a factor of about five, whereas here we observe
$14/6 \approx 2.33$.  The other observed cost overhead factors
are $95/12 \approx 6.79$ and $449/95 \approx 4.72$.
In Figure~\ref{figdoubles}, the evolution of the logarithmic
wall clock time is plotted.
 
\begin{figure}[hbt]
\begin{center}
{\includegraphics[width=9cm]{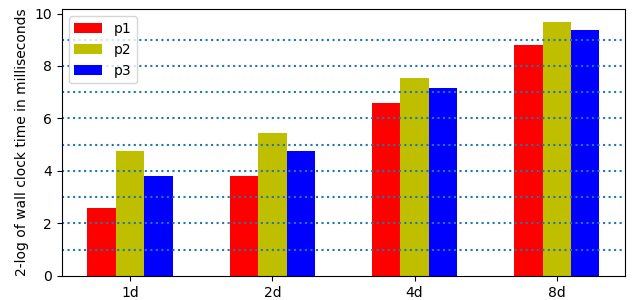}}
\caption{The 2-logarithm of the wall clock times
to evaluate and differentiate $p_1$, $p_2$, $p_3$ 
in double (1d), double double (2d), quad double (4d),
and octo double (8d) precision,
for power series truncated at degree 191.}
\label{figdoubles}
\end{center}
\end{figure}

If the number of coefficients in a truncated series doubles
from 32 to 64, and from 64 to 128, then one would expect
the observed wall clock times to quadruple, as the cost
of the convolutions is $O(d^2)$ for the truncation degree~$d$.
As shown in Figure~\ref{figscale}, the wall clock times doubles,
as the difference between the bars in the 2-log times is about one.
 
\begin{figure}[hbt]
\begin{center}
{\includegraphics[width=9cm]{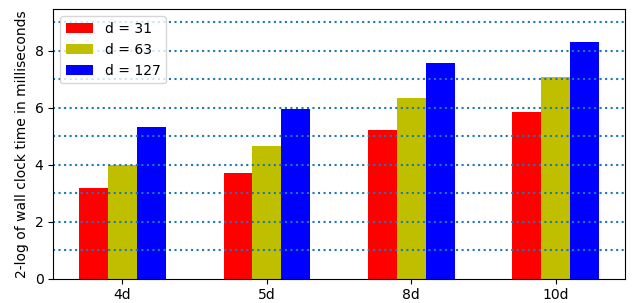}}
\caption{The 2-logarithm of the wall clock times
to evaluate and differentiate $p_1$ 
in quad double (4d), penta double (5d), octo double (8d),
and deca double (10d) precision,
for power series truncated at degrees 31, 63, and 127.}
\label{figscale}
\end{center}
\end{figure}

\section{Conclusions}

The evaluation and differentiation of a polynomial in $n$ variables
at power series truncated at some finite degree~$d$ 
requires a number of convolution jobs proportional
to the number of variables per monomial and the number $N$ of monomials.
The convolution jobs are arranged in layers of 
jobs that can be executed simultaneously.
A scan performs the $N$ addition jobs 
in $\lceil ~\! \log_2(N) ~\! \rceil$ steps.
For polynomials where $N$ dominates the number of variables per monomial,
The theoretical speedup is bounded by~$d N/\log_2(N)$.

Data staging algorithms define the coordinates for the convolution
and the addition jobs.
Speedup factors comparing the V100 and P100 are close to the
ratio of their theoretical peak performance.
Experimental results show that teraflop performance is obtained.
The accelerated algorithms scale well for increasing degrees and precisions.
GPUs are well suited to compensate for the overhead 
of power series arithmetic and multiple double precision.

\section*{Appendix}

The source code is available on github,
under the GNU GPL license,
as part of the code of PHCpack.
Tables~\ref{tabV100runs1}, \ref{tabV100runs2} and~\ref{tabV100runs3}
contain times on the three test polynomials, on the V100.
Table~\ref{tabV100wall} illustrates the fluctuation of the wall clock times.

\begin{table}[hbt]
\begin{center}
{\small
\begin{tabular}{r|r||rrrrrrrrrr}
\multicolumn{1}{c|}{~} 
    & \multicolumn{1}{c||}{$d$} 
    & \multicolumn{1}{c}{0}
    & \multicolumn{1}{c}{8} 
    & \multicolumn{1}{c}{15} 
    & \multicolumn{1}{c}{31} 
    & \multicolumn{1}{c}{63}
    & \multicolumn{1}{c}{95}
    & \multicolumn{1}{c}{127} 
    & \multicolumn{1}{c}{152}
    & \multicolumn{1}{c}{159}
    & \multicolumn{1}{c}{191} \\ \hline \hline
 1d & cnv & 0.08 & 0.07 & 0.07 & 0.07 & 0.11 & 0.17 & 0.28 & 0.39 & 0.40 & 0.56 \\
     & add & 0.10 & 0.10 & 0.09 & 0.09 & 0.08 & 0.08 & 0.09 & 0.10 & 0.10 & 0.11 \\
\hline
     & sum & 0.18 & 0.17 & 0.16 & 0.16 & 0.19 & 0.26 & 0.37 & 0.50 & 0.50 & 0.67 \\
\hline \hline
     & wall & 9.00 & 9.00 & 8.00 & 9.00 & 7.00 & 6.00 & 6.00 & 6.00 & 0.67 & 6.00 \\
\hline \hline
 2d & cnv & 0.06 & 0.11 & 0.17 & 0.31 & 0.98 &  2.39 & 3.58 &  7.20 &  7.48 &  9.23 \\
     & add & 0.07 & 0.07 & 0.06 & 0.07 & 0.09 &  0.11 & 0.13 &  0.15 &  0.16 &  0.18 \\
\hline
     & sum & 0.13 & 0.18 & 0.23 & 0.38 & 1.06 &  2.50 & 3.71 &  7.36 &  7.63 &  9.41 \\
\hline \hline
     & wall & 5.00 & 5.00 & 5.00 & 5.00 & 6.00 &  7.00 & 9.00 & 12.00 & 12.00 & 14.00 \\
\hline \hline
 3d & cnv & 0.10 & 0.57 & 1.00 & 2.00 &  5.80 & 13.82 & 19.88 & 38.70 & 40.53 & 52.03 \\
     & add & 0.08 & 0.08 & 0.08 & 0.09 &  0.12 &  0.15 &  0.19 &  0.24 &  0.22 &  0.26 \\
\hline
     & sum & 0.18 & 0.65 & 1.08 & 2.09 &  5.92 & 13.97 & 20.07 & 38.94 & 40.76 & 52.29 \\ 
\hline \hline
     & wall & 5.00 & 5.00 & 6.00 & 7.00 & 11.00 & 19.00 & 25.00 & 44.00 & 46.00 & 57.00 \\
\hline \hline
 4d & cnv & 0.15 & 1.24 & 2.19 & 4.39 & 11.01 & 23.99 & 35.40 & 65.76 & 68.51 & 90.40 \\
     & add & 0.10 & 0.10 & 0.10 & 0.12 &  0.15 &  0.20 &  0.24 &  0.30 &  0.29 &  0.33 \\
\hline
     & sum & 0.25 & 1.34 & 2.29 & 4.51 & 11.16 & 24.19 & 35.64 & 66.06 & 68.80 & 90.73 \\
\hline \hline
     & wall & 5.00 & 6.00 & 7.00 & 9.00 & 16.00 & 29.00 & 40.00 & 71.00 & 73.00 & 95.00 \\
\hline \hline
 5d & cnv & 0.25 & 2.23 & 3.98 &  7.94 & 20.59 & 42.87 & 57.19 & 114.57 & 111.68 & 143.70 \\
     & add & 0.11 & 0.11 & 0.11 &  0.13 &  0.18 &  0.24 &  0.30 &   0.39 &   0.36 &   0.42 \\
\hline
     & sum & 0.37 & 2.34 & 4.09 &  8.07 & 20.77 & 43.11 & 57.49 & 114.96 & 112.04 & 144.12 \\
\hline \hline
     & wall & 5.00 & 7.00 & 8.00 & 13.00 & 25.00 & 48.00 & 62.00 & 123.00 & 117.00 & 150.00 \\
\hline \hline
 8d & cnv & 0.82 &  8.92 & 15.97 & 32.26 & 77.24 & 150.64 & 182.09 & 359.68 & 377.88 & 442.90 \\
     & add & 0.30 &  0.33 &  0.29 &  0.31 &  0.35 &   0.40 &   0.50 &   0.61 &   0.59 &   0.67 \\
\hline
     & sum & 1.12 &  9.25 & 16.27 & 32.57 & 77.59 & 151.04 & 182.58 & 360.29 & 378.48 & 443.57 \\
\hline \hline
     & wall & 8.00 & 17.00 & 21.00 & 37.00 & 82.00 & 156.00 & 188.00 & 365.00 & 384.00 & 449.00 \\
\hline \hline
10d & cnv & 1.30 & 15.74 & 26.57 & 52.31 & 130.04 & 257.59 & 312.16 & 635.42 \\
     & add & 0.36 &  0.42 &  0.38 &  0.40 &   0.44 &   0.50 &   0.62 &   0.75 \\
\hline
     & sum & 1.66 & 16.16 & 26.95 & 52.71 & 130.48 & 258.09 & 312.78 & 636.17 \\
\hline \hline
     & wall & 7.00 & 30.00 & 35.00 & 58.00 & 135.00 & 263.00 & 317.00 & 641.00 \\
\end{tabular}
}
\caption{Times in milliseconds to evaluate and differentiate~$p_1$,
for increasing degree~$d$, and for increasing precision.}
\label{tabV100runs1}
\end{center}
\end{table}

\begin{table}[hbt]
\begin{center}
{\small
\begin{tabular}{c|r||rrrrrrrrrr}
\multicolumn{1}{c|}{~}
    & \multicolumn{1}{c||}{$d$}
    & \multicolumn{1}{c}{0}
    & \multicolumn{1}{c}{8} 
    & \multicolumn{1}{c}{15} 
    & \multicolumn{1}{c}{31} 
    & \multicolumn{1}{c}{63}
    & \multicolumn{1}{c}{95}
    & \multicolumn{1}{c}{127} 
    & \multicolumn{1}{c}{152}
    & \multicolumn{1}{c}{159}
    & \multicolumn{1}{c}{191} \\ \hline \hline
 1d & cnv &  0.41 &  0.41 &  0.42 &  0.43 &  0.50 &  0.63 &  0.80 &  1.01 &  1.04 &  1.32 \\
    & add &  0.05 &  0.05 &  0.05 &  0.05 &  0.05 &  0.05 &  0.06 &  0.08 &  0.08 &  0.08 \\
\hline
    & sum &  0.45 &  0.45 &  0.48 &  0.48 &  0.55 &  0.69 &  0.87 &  1.09 &  1.12 &  1.41 \\
\hline \hline
    & wall & 26.00 & 26.00 & 25.00 & 27.00 & 25.00 & 26.00 & 26.00 & 27.00 & 27.00 & 27.00 \\
\hline \hline
 2d & cnv &  0.42 &  0.55 &  0.69 &  1.01 &  2.42 &  4.87 &  6.84 & 12.35 & 12.89 & 16.19 \\
     & add &  0.05 &  0.05 &  0.05 &  0.05 &  0.07 &  0.09 &  0.11 &  0.14 &  0.13 &  0.15 \\
\hline
     & sum &  0.47 &  0.60 &  0.74 &  1.07 &  2.49 &  4.96 &  6.95 & 12.48 & 13.02 & 16.35 \\
\hline \hline
     & wall & 25.00 & 25.00 & 26.00 & 27.00 & 29.00 & 31.00 & 33.00 & 38.00 & 39.00 & 43.00 \\
\hline \hline
 3d & cnv &  0.53 &  1.53 &  2.44 &  4.50 & 11.71 & 24.59 & 34.53 &  75.74 &  78.59 & 94.57 \\
     & add &  0.06 &  0.06 &  0.06 &  0.07 &  0.09 &  0.13 &  0.16 &   0.21 &   0.20 &  0.22 \\
\hline 
     & sum &  0.58 &  1.59 &  2.51 &  4.58 & 11.80 & 24.72 & 34.69 &  75.95 &  78.78 & 94.79 \\
\hline \hline
     & wall & 27.00 & 28.00 & 29.00 & 31.00 & 37.00 & 50.00 & 61.00 & 102.00 & 105.00 & 120.00 \\
\hline \hline
 4d & cnv &  0.57 &  2.61 &  4.37 &  8.57 & 21.29 & 44.17 & 61.66 & 118.98 & 125.11 & 157.94 \\
     & add &  0.07 &  0.08 &  0.08 &  0.09 &  0.12 &  0.17 &  0.20 &   0.25 &   0.25 &   0.29 \\
\hline
     & sum &  0.65 &  2.68 &  4.45 &  8.66 & 21.41 & 44.34 & 61.87 & 119.23 & 125.37 & 158.23 \\ 
\hline \hline
     & wall & 26.00 & 29.00 & 31.00 & 35.00 & 48.00 & 70.00 & 87.00 & 145.00 & 151.00 & 184.00 \\ \hline \hline
 5d & cnv &  0.84 &  5.30 &  9.22 & 18.31 & 39.36 &  80.19 & 112.57 & 205.65 & 214.06 & 273.53 \\  
     & add &  0.09 &  0.09 &  0.10 &  0.11 &  0.15 &   0.20 &   0.25 &   0.34 &   0.31 &   0.36 \\
\hline
     & sum &  0.93 &  5.40 &  9.32 & 18.42 & 39.51 &  80.40 & 112.83 & 205.99 & 214.36 & 273.89 \\
\hline \hline
     & wall & 26.00 & 31.00 & 34.00 & 44.00 & 65.00 & 105.00 & 138.00 & 231.00 & 239.00 & 299.00 \\
\hline \hline
 8d & cnv &  1.76 & 16.56 & 29.58 & 59.66 & 139.71 & 253.36 & 328.69 & 639.72 & 672.51 & 789.62 \\
     & add &  0.23 &  0.24 &  0.25 &  0.26 &   0.30 &   0.35 &   0.42 &   0.51 &   0.51 &   0.58 \\
\hline
     & sum &  1.99 & 16.80 & 29.82 & 59.92 & 140.01 & 253.71 & 329.11 & 640.23 & 673.02 & 790.20 \\
\hline \hline
     & wall & 27.00 & 42.00 & 55.00 & 85.00 & 165.00 & 279.00 & 355.00 & 666.00 & 699.00 & 817.00 \\
\hline \hline
10d & cnv &  2.64 & 28.79 & 48.58 &  94.48 & 238.82 & 442.12 & 559.61 & 1115.03 \\
     & add &  0.29 &  0.31 &  0.32 &   0.34 &   0.38 &   0.45 &   0.54 &    0.67 \\
\hline
     & sum &  2.93 & 29.09 & 48.89 &  94.82 & 239.20 & 442.57 & 560.15 & 1115.71 \\
\hline \hline
     & wall & 29.00 & 55.00 & 75.00 & 120.00 & 265.00 & 468.00 & 586.00 & 1142.00 \\
\end{tabular}
}
\caption{Times in milliseconds to evaluate and differentiate $p_2$,
for increasing degree~$d$, and for increasing precision.}
\label{tabV100runs2}
\end{center}
\end{table}

\begin{table}[hbt]
\begin{center}
{\small
\begin{tabular}{c|r||rrrrrrrrrr}
    \multicolumn{1}{c|}{~}
    & \multicolumn{1}{c||}{$d$}
    & \multicolumn{1}{c}{0}
    & \multicolumn{1}{c}{8} 
    & \multicolumn{1}{c}{15} 
    & \multicolumn{1}{c}{31} 
    & \multicolumn{1}{c}{63}
    & \multicolumn{1}{c}{95}
    & \multicolumn{1}{c}{127} 
    & \multicolumn{1}{c}{152}
    & \multicolumn{1}{c}{159}
    & \multicolumn{1}{c}{191} \\ \hline \hline
 1d & cnv &  0.05 &  0.05 &  0.05 &  0.06 &  0.12 &  0.22 &  0.37 &  0.53 &  0.55 &  0.78 \\ 
     & add &  0.11 &  0.11 &  0.11 &  0.11 &  0.12 &  0.16 &  0.19 &  0.21 &  0.21 &  0.25 \\
\hline
     & sum &  0.16 &  0.15 &  0.15 &  0.17 &  0.24 &  0.37 &  0.55 &  0.74 &  0.77 &  1.03 \\
\hline \hline
     & wall & 12.00 & 13.00 & 12.00 & 12.00 & 13.00 & 13.00 & 13.00 & 13.00 & 14.00 & 14.00 \\
\hline \hline
 2d & cnv &  0.05 &  0.13 &  0.22 &  0.42 &  1.36 &  3.43 &  5.20 & 10.47 & 10.93 & 13.52 \\
     & add &  0.12 &  0.11 &  0.11 &  0.13 &  0.18 &  0.25 &  0.33 &  0.44 &  0.37 &  0.44 \\
\hline
     & sum &  0.17 &  0.24 &  0.34 &  0.54 &  1.54 &  3.69 &  5.52 & 10.91 & 11.30 & 13.96 \\
\hline \hline
     & wall & 13.00 & 13.00 & 13.00 & 13.00 & 14.00 & 17.00 & 18.00 & 25.00 & 24.00 & 27.00 \\
\hline \hline
 3d & cnv &  0.11 &  0.81 &  1.42 &  2.86 &  8.26 & 20.06 & 29.10 & 56.76 & 59.25 & 76.49 \\
     & add &  0.14 &  0.14 &  0.15 &  0.18 &  0.25 &  0.37 &  0.46 &  0.56 &  0.54 &  0.64 \\
\hline
     & sum &  0.25 &  0.95 &  1.57 &  3.04 &  8.52 & 20.43 & 29.56 & 57.32 & 59.79 & 77.13 \\
\hline \hline
     & wall & 13.00 & 14.00 & 14.00 & 16.00 & 21.00 & 33.00 & 43.00 & 71.00 & 73.00 & 90.00 \\
\hline \hline
 4d & cnv &  0.19 &  1.75 &  3.11 &  6.22 & 15.92 & 34.81 & 51.57 &  95.91 & 100.03 & 129.76 \\
     & add &  0.17 &  0.19 &  0.19 &  0.24 &  0.33 &  0.46 &  0.61 &   0.73 &   0.71 &   0.84 \\
\hline
     & sum &  0.36 &  1.94 &  3.30 &  6.45 & 16.25 & 35.27 & 52.18 &  96.64 & 100.75 & 130.61 \\
\hline \hline
     & wall & 13.00 & 14.00 & 16.00 & 19.00 & 29.00 & 49.00 & 65.00 & 109.00 & 114.00 & 144.00 \\ \hline \hline
 5d & cnv &  0.35 &  3.24 &  5.76 & 11.56 & 29.23 & 62.60 & 83.30 & 157.02 & 163.71 & 210.28 \\
     & add &  0.24 &  0.26 &  0.29 &  0.41 &  0.57 &  0.57 &  0.74 &   0.91 &   0.88 &   1.04 \\
\hline
     & sum &  0.59 &  3.50 &  6.02 & 11.84 & 29.63 & 84.04 & 84.04 & 157.93 & 164.59 & 211.31 \\
\hline \hline
     & wall & 15.00 & 17.00 & 18.00 & 24.00 & 43.00 & 76.00 & 97.00 & 171.00 & 178.00 & 224.00 \\
\hline \hline
 8d & cnv &  1.19 & 13.11 & 23.49 & 47.32 & 107.64 & 221.87 & 265.69 & 528.19 & 553.59 & 647.95 \\
     & add &  0.62 &  0.70 &  0.70 &  0.75 &   0.84 &   0.98 &   1.22 &   1.48 &   1.42 &   1.69 \\
\hline
     & sum &  1.80 & 13.80 & 24.18 & 48.07 & 108.48 & 222.84 & 266.31 & 529.67 & 555.01 & 649.64 \\
\hline \hline
     & wall & 14.00 & 27.00 & 37.00 & 61.00 & 121.00 & 236.00 & 280.00 & 542.00 & 573.00 & 663.00 \\
\hline \hline
10d & cnv &  1.90 & 23.12 & 39.12 & 75.81 & 181.99 & 380.19 & 455.78 & 926.53 \\
     & add &  0.80 &  0.88 &  0.89 &  0.94 &   1.04 &   1.19 &   1.47 &   1.92 \\
\hline
     & sum &  2.70 & 24.00 & 40.01 & 76.76 & 183.04 & 381.38 & 457.25 & 928.45 \\
\hline \hline
     & wall & 16.00 & 37.00 & 52.00 & 90.00 & 197.00 & 394.00 & 470.00 & 941.00
\end{tabular}
}
\caption{Times in milliseconds to evaluate and differentiate $p_3$,
for increasing degree~$d$, and for increasing precision.}
\label{tabV100runs3}
\end{center}
\end{table}

\begin{table}[hbt]
\begin{center}
\begin{tabular}{r|cccccc}
wall clock times & 941 & 942 & 943 & 944 & 945 & 946 \\ \hline \hline
fixed seed one   &  0  &  0  &  3  &  5  &  2  &  0  \\ \hline
different seeds  &  4  &  1  &  3  &  1  &  0  &  1
\end{tabular}
\caption{Wall clock times in milliseconds to evaluate and differentiate $p_3$
in deca double precision, for degree~152, with frequencies for ten runs,
once with the fixed seed one, and once with different seeds for the
random numbers.}
\label{tabV100wall}
\end{center}
\end{table}


\begin{thebibliography}{23}

\bibitem{Eme09}
P.~Emeliyanenko.
\newblock Efficient multiplication of polynomials on graphics hardware.
\newblock In Y.~Dou, R.~Gruber, and J.M. Joller, editors, {\em Advanced
  Parallel Processing Technologies. 8th International Symposium, APPT 2009,
  Rapperswil, Switzerland, August 2009}, volume 5737 of {\em Lecture Notes in
  Computer Science}, pages 134--149. Springer-Verlag, 2009.

\bibitem{HM12}
S.~A. Haque and M.~M. Maza.
\newblock Plain polynomial arithmetic on GPU.
\newblock In {\em High Performance Computing Symposium (HPCS 2012),
          J.\ of Physics: Conference Series}, 385, 2012.

\bibitem{HLMMPX14}
S.~A. Haque, X.~Li, F.~Mansouri, M.~M. Maza, W.~Pan, and N.~Xie.
\newblock Dense arithmetic over finite fields with the {CUMODP} library.
\newblock In H.~Hong and C.~Yap, editors, {\em Mathematical Software -- ICMS
  2014}, volume 8592 of {\em Lecture Notes in Computer Science}, pages
  725--732. Springer-Verlag, 2014.

\bibitem{HLB01}
Y. Hida, X.~S. Li, and D.~H. Bailey.
\newblock Algorithms for quad-double precision floating point arithmetic.
\newblock In the {\em Proceedings  of the 15th IEEE Symposium on Computer
          Arithmetic (Arith-15 2001)}, pages 155--162.
\newblock IEEE Computer Society, 2001.

\bibitem{Gla20}
J. Glabe.
\newblock {\em Numerical Continuation on a GPU for Kinematic Synthesis}.
\newblock PhD thesis, University of California, Irvine, 2020.

\bibitem{GM20a}
J. Glabe and J.M. McCarthy.
\newblock A GPU homotopy path tracker and end game for mechanism synthesis.
\newblock In {\em the Proceedings of the 2020 USCToMM Symposium on
          Mechanical Systems and Robotics}, pages 206--215.  Springer 2020.

\bibitem{GM20b}
J. Glabe and J.M. McCarthy.
\newblock Numerical continuation on a graphical processing unit
          for kinematic synthesis.
\newblock {\em Journal of Computing and Information Science 
          in Engineering} 20(6), 2020.  

\bibitem{GHRKN16}
F.~Gremse, A.~H{\"{o}}fter, L.~Razik, F.~Kiessling, U.~Naumann.
\newblock {GPU}-accelerated adjoint algorithmic differentiation.
\newblock {\em Computer Physics Communications}, 200, pages 300--311, 2016.

\bibitem{GW08}
A.~Griewank and A.~Walther.
\newblock {\em Evaluating Derivatives: Principles and Techniques
          of Algorithmic Differentiation}.
\newblock SIAM, 2008.

\bibitem{IK20}
K. Isupov and V. Knyazkov.
\newblock Multiple-precision matrix-vector multiplication
          on graphics processing units.
\newblock {\em Program Systems: Theory and Applications} 11(3): 62--84, 2020.

\bibitem{JMPT16}
M. Joldes, J.-M. Muller, V. Popescu, W. Tucker.
\newblock CAMPARY: Cuda Multiple Precision Arithmetic Library and
          Applications.
\newblock In {\em Mathematical Software -- ICMS 2016, the 5th
          International Conference on Mathematical Software}, pages 232--240,
\newblock Springer-Verlag, 2016.

\bibitem{KH13}
D.B. Kirk and W.W. Hwu.
\newblock {\em Programming Massively Parallel Processors. A Hands-on Approach}.
\newblock Morgan Kaufmann, Second Edition, 2013.

\bibitem{LHL10}
M.~Lu, B.~He, and Q.~Luo.
\newblock Supporting extended precision on graphics processors.
\newblock In A.~Ailamaki and P.A. Boncz, editors, {\em Proceedings of the Sixth
  International Workshop on Data Management on New Hardware (DaMoN 2010), June
  7, 2010, Indianapolis, Indiana}, pages 19--26, 2010.

\bibitem{MP10}
M.M. Maza and W.~Pan.
\newblock Fast polynomial multiplication on a {GPU}.
\newblock {\em Journal of Physics: Conference Series}, 256, 2010.
\newblock High Performance Computing Symposium (HPCS2010), 5-9 June 2010,
  Victoria College, University of Toronto, Canada.

\bibitem{Mor87}
A.~Morgan.
\newblock {\em Solving Polynomial Systems using Continuation for Engineering
  and Scientific Problems}, volume~57 of {\em Classics in Applied Mathematics}.
\newblock SIAM, 2009.

\bibitem{MBDJJLMRT18}
J.-M. Muller, N. Brunie, F. de Dinechin, C.-P. Jeannerod, M. Joldes,
V. Lefevre, G. Melquiond, N. Revol, S. Torres.
\newblock {\em Handbook of Floating-Point Arithmetic.}
\newblock Second Edition, Springer-Verlag, 2018.

\bibitem{TVV20a}
S. Telen, M. Van Barel, and J. Verschelde.
\newblock A robust numerical path tracking algorithm 
          for polynomial homotopy continuation.
\newblock {\em SIAM Journal on Scientific Computing}
          42(6):A3610--A3637, 2020.

\bibitem{TVV20b}
S. Telen, M. Van Barel, and J. Verschelde.
\newblock Robust numerical tracking of one path of a polynomial homotopy
          on parallel shared memory computers.
\newblock In the {\em Proceedings of the 22nd International Workshop on 
          Computer Algebra in Scientific Computing (CASC 2020)},
          volume 12291 of {\em Lecture Notes in Computer Science},
          pages 563--582.
\newblock Springer-Verlag, 2020. 

\bibitem{Ver99}
J.~Verschelde.
\newblock Algorithm 795: {PHCpack}: A general-purpose solver for polynomial
          systems by homotopy continuation.
\newblock {\em ACM Transactions on Mathematical Software}
          25(2):251--276, 1999.  

\bibitem{Ver20}
J.~Verschelde.
\newblock Parallel software to offset the cost of higher precision.
\newblock To appear in the {\em Proceedings of HILT 2020.}

\bibitem{VY12}
J.~Verschelde and G.~Yoffe.
\newblock Evaluating polynomials in several variables and their derivatives on
  a {GPU} computing processor.
\newblock In {\em Proceedings of the 2012 IEEE 26th International Parallel and
  Distributed Processing Symposium Workshops (PDSEC 2012)}, pages 1391--1399.
  IEEE Computer Society, 2012.

\bibitem{VY13}
J.~Verschelde and G.~Yoffe.
\newblock Orthogonalization on a general purpose graphics processing unit with
  double double and quad double arithmetic.
\newblock In {\em Proceedings of the 2013 IEEE 27th International Parallel and
  Distributed Processing Symposium Workshops (PDSEC 2013)}, pages 1373--1380.
  IEEE Computer Society, 2013.

\bibitem{VY14}
J.~Verschelde and X.~Yu.
\newblock {GPU} acceleration of {N}ewton's method for large systems of
  polynomial equations in double double and quad double arithmetic.
\newblock In {\em Proceedings of the 16th IEEE International Conference on High
  Performance Computing and Communication (HPCC 2014)}, pages 161--164. IEEE
  Computer Society, 2014.

\bibitem{VY15a}
J.~Verschelde and X.~Yu.
\newblock Accelerating polynomial homotopy continuation on a graphics
  processing unit with double double and quad double arithmetic.
\newblock In J.-G. Dumas and E.L. Kaltofen, editors, {\em Proceedings of the
  7th International Workshop on Parallel Symbolic Computation (PASCO 2015),
  July 10-11 2015, Bath, United Kingdom}, pages 109--118. ACM, 2015.

\bibitem{VY15b}
J.~Verschelde and X.~Yu.
\newblock Tracking many solution paths of a polynomial homotopy on a graphics
  processing unit in double double and quad double arithmetic.
\newblock In {\em Proceedings of the 17th IEEE International Conference on High
  Performance Computing and Communication (HPCC 2015)}, pages 371--376. IEEE
  Computer Society, 2015.

\end{thebibliography}
\end{document}